\newtheorem{Definition}{Definition}
\newtheorem{Lemma}{Lemma}
\newtheorem{Theorem}{Theorem}
\def\Pr{{\rm \bold {Pr}}}
\def\E{{\rm \bold  E}}
\def\ci{\perp\!\!\!\perp}
\begin{document}

\sloppy

\title{Two-Hop Interference Channels: \\ Impact of Linear Time-Varying Schemes} 


 \author{
   \IEEEauthorblockN{Ibrahim Issa, Silas L. Fong, and A. Salman Avestimehr}
   \IEEEauthorblockA{School of Electrical and Computer Engineering\\
     Cornell University, Ithaca, New York, USA\\
     Emails: ii47@cornell.edu, lf338@cornell.edu, avestimehr@ece.cornell.edu} 
 }



\maketitle

\begin{abstract}
We consider the two-hop interference channel (IC) with constant real channel coefficients, which consists of two source-destination pairs, separated by two relays. We analyze the achievable degrees of freedom (DoF) of such network when relays are restricted to perform scalar amplify-forward (AF) operations, with possibly time-varying coefficients. We show that, somewhat surprisingly, by providing the flexibility of choosing time-varying AF coefficients at the relays, it is possible to achieve 4/3 sum-DoF. We also develop a novel outer bound that matches our achievability, hence characterizing the sum-DoF of two-hop interference channels with time-varying AF relaying strategies.
\end{abstract}

\section{Introduction}

Multi-hopping is typically viewed as an effective approach to extend the coverage range of wireless networks, by bridging the gap between the sources and destinations via relays. However, it has also the potential to significantly impact network capacity by enabling new interference management techniques (see, e.g., \cite{Mohajer,Simeone,Thejaswi}). In particular, from the degrees of freedom (DoF) perspective that is the focus of this paper, authors in~\cite{Jafar} considered a two-hop complex interference channel (IC) consisting of two sources, two relays, and two destinations, and they showed by introducing a new scheme called aligned-interference-neutralization that the sum-DoF of this network is 2 (i.e., twice the sum-DoF of a single-hop IC). More recently, authors in~\cite{IlanKKK} have considered two-hop interference networks with $K$ sources, $K$ relays, and $K$ destinations, and they showed by developing a new scheme named aligned-network-diagonalization that relays have the potential to asymptotically cancel the interference between all source-destination pairs, hence the cut-set bound is achievable (i.e., sum-DoF of $K$).

While the aforementioned results essentially demonstrate that significant DoF gains can be achieved by carefully designing the interference management strategies in multi-hop interference networks, they often require complicated relaying strategies (such as, utilizing rational dimensions for neutralizing the interference when the channels are not time-varying). In this paper, we take a complementary approach and ask how much of these DoF gains can be realized if we limit the operation of relays to simple \emph{scalar linear} strategies?

We focus on two-hop interference channels with constant \emph{real} channel coefficients (i.e., slow fading), and assume that the relays are allowed to perform only scalar amplify-forward (AF) operations with possibly \emph{time-varying} AF coefficients. It is easy to see that if AF coefficients of the relays remain constant during the course of the scheme, then the problem will induce to a single-hop IC, in which the sum-DoF is at most $1$. However, we show that, somewhat surprisingly, by providing the flexibility of choosing time-varying AF coefficients at the relays, a sum-DoF of $4/3$ is achievable.

The key idea behind the achievability strategy is that the flexibility of choosing the relay AF factors allows canceling, in any specific time slot, one source signal from one destination. So, we use this flexibility to guarantee that, for each destination, at most one third of its received symbols are distinct interference symbols, which allows it to achieve $2/3$ DoF.

To derive the outer bound, we break the end-to-end mutual information achieved by any scheme into five different groups, based on five distinct states that scalar linear schemes can create at each time-step. We then proceed to prove three outer bounds that effectively capture the tension between these groups. Analyzing the three bounds yields that the sum-DoF is upper bounded by $4/3$ almost surely.

\section{Problem Setting \& Main Result} \label{setting}

As illustrated in Figure \ref{channel}, we consider the two-hop IC, consisting of two sources, two relays, and two destinations.\\  
\begin{figure}[h]
\centering 
{\includegraphics[scale=1.2]{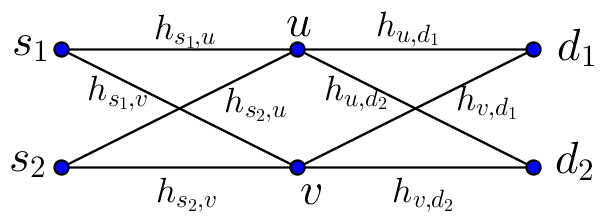}}
\caption{Two-hop IC.}\label{channel}
\end{figure}

We denote the two sources by $s_1$ and $s_2$, the two relays by $u$ and $v$, and the destinations by $d_1$ and $d_2$.
Each source $s_i$ has a message $W_i$ intended for $d_i$ ($i \in \{1,2\}$), and $W_1 \ci W_2$. \\ 

Let $\mathbf{H_1}=\begin{bmatrix} h_{s_1,u} & h_{s_2,u} \\ h_{s_1,v} & h_{s_2,v} \end{bmatrix}$ and $\mathbf{H_2}=\begin{bmatrix} h_{u,d_1} & h_{v,d_1} \\ h_{u,d_2} & h_{v,d_2} \end{bmatrix}$ be the channels of the first and second hop, respectively. We assume that the channel gains are real-valued and drawn from a continuous distribution,  fixed during the course of communication, and known at all nodes.


The transmit signal of $s_i$ and relay $r$ at time $k$ are respectively denoted by $X_{i,k} \in \mathbb{R}$ and $X_{r,k} \in \mathbb{R}$, $i \in \{1,2\}$ and $r \in \{u,v\}$. The received signal of relay $r$ at time $k$ is
\begin{equation*} \label{firstChannelAddition}
Y_{r,k}=h_{s_1,r}X_{1,k}+h_{s_2,r}X_{2,k}+Z_{r,k}, \text{\qquad $r \in \{u,v\}$, $k \in \mathbb{N},$}
\end{equation*}
and for destination $d_i$, the received signal at time $k$ is
\begin{equation*}
Y_{i,k}=h_{u,d_i}X_{u,k}+h_{v,d_i}X_{v,k}+Z_{d_i,k},  \text{\qquad $i \in \{1,2\}$, $k \in \mathbb{N},$}
\end{equation*}
where $Z_{r,k}$'s and $Z_{d_i,k}$'s are i.i.d (over time and with respect to each other) noise terms distributed as $\sim\mathcal{N}(0,1)$, which are also independent of the messages $\{W_1,W_2\}$.
We will use $X^n$ to denote a random column vector $[X_1\  X_2\  \ldots\  X_n]^T$. Also, for any $\mathcal{S}\subseteq \{1, 2, \ldots, n\}$, we let $X^\mathcal{S}$ denote $\{X_k | k\in\mathcal{S}\}$.

\begin{Definition} \label{defCode}
An $(n,R_1,R_2)$-scheme with power constraint $P$ on the two-hop IC consists of the following:
\begin{enumerate}
\item A message set $\mathcal{W}_i=\{1,2,\dots,2^{nR_i}\}$ at $s_i$, $i \in \{1,2\}$.
\item An encoding function $f_i$: $\mathcal{W}_i \rightarrow \mathcal{X}_i^n$ for each source $s_i$, $i \in \{1,2\}$, such that $X_i^n=f_i (W_i)$, and every codeword $x_i^n$ satisfies the power constraint $ \sum_{k=1}^{n}{x^2_{i,k}} \leq nP $.
\item A relaying function $f_{r,k}$: $\mathcal{Y}_r^{k-1} \rightarrow \mathcal{X}_r$ at $r$ for each $r \in \{u,v\}$ and each $k \in \{1,2,\dots,n\}$, such that $X_{r,k}=f_{r,k}(Y_r^{k-1})$. In addition, every codeword $x_r^n$ should satisfy the power constraint $\sum_{k=1}^{n}{x^2_{r,k}} \leq nP$.
\item A decoding function $g_i$: $\mathcal{Y}_i^n \rightarrow \mathcal{W}_i$ for destination $d_i$, $i \in \{1,2\}$, such that $\hat{W}_i = g_i(Y_i^n)$.
\item The error probability $P_e^n$ of the scheme is defined as
$
P_e^n = \Pr\left( \bigcup_{i=1}^2 \{W_i \neq \hat{W}_i \} \right),
$
where each $W_i$ is chosen independently and uniformly at random from $\{1,2,\dots,2^{nR_i}\}$, $i \in \{1,2\}.$
\end{enumerate}
\end{Definition}

\begin{Definition} \label{defTVAF} {\bf{(Time-varying AF scheme)}}
Let $\mathcal{U}$ and $\mathcal{V}$ be two finite subsets of $\mathbb{R}$. An $(n,R_1,R_2)$-scheme on the two-hop IC is called a time-varying AF on $(\mathcal{U,V})$ if there exist ${\{\mu_k \in \mathcal{U}\}}_{k=1}^n$ and ${\{\lambda_k \in \mathcal{V}\}}_{k=1}^n$ such that, for each $k \in \{1,2,\dots,n\}$, $f_{u,k}(Y_u^{k-1})=\mu_k Y_{u,k-1}$ and $f_{v,k}(Y_v^{k-1})=\lambda_k Y_{v,k-1}$.
\end{Definition}

\begin{Definition} \label{defAchievable}
A rate pair $(R_1, R_2)$ is \textit{time-varying-AF-achievable on $(\mathcal{U},\mathcal{V})$} if there exists a sequence of $(n, R_1, R_2)$-schemes that are time-varying AF on $(\mathcal{U},\mathcal{V})$, s.t.  $\lim\limits_{n\rightarrow \infty} P_e^n= 0$.
\end{Definition}

\begin{Definition} \label{defDoF}
The sum-DoF achievable by time-varying AF, denoted by $\mathcal{D}$, is defined by
\[
\mathcal{D} = \sup_{\mathcal{U},\mathcal{V}}\lim_{P\rightarrow\infty}\! \sup\left\{\left. \frac{R_1+R_2}{\frac{1}{2}\log_2 P} \:\right|\parbox[c]{1.5 in}{$(R_1, R_2) \text{ is time-varying-}\\\text{AF-achievable on }(\mathcal{U},\mathcal{V})$}\! \right\}.
\]
\end{Definition}

\noindent The main result of the paper is the following theorem.
\begin{Theorem} \label{MainTh}
The sum-DoF of two-hop IC with time-varying AF schemes is $4/3$ for almost all values of channel gains.
\end{Theorem}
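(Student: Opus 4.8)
The plan is to establish the matching bounds $\mathcal{D}\ge 4/3$ and $\mathcal{D}\le 4/3$.

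\emph{Achievability.} I would exhibit an explicit time-varying AF scheme that works in blocks of three channel uses, each source injecting two fresh symbols $(a_1,a_2)$, resp.\ $(b_1,b_2)$, per block. Within a block let $s_1$ transmit $(a_2,a_1,a_1)$ and $s_2$ transmit $(b_1,b_1,b_2)$ --- two distinct symbols over three uses, i.e.\ aiming for $2/3$ DoF each. Using the one free parameter that the relay coefficients $(\mu_k,\lambda_k)$ provide per slot, I would choose them so that the effective $s_1\!\to\!d_2$ gain is nulled in the slot carrying the block's first symbol and the effective $s_2\!\to\!d_1$ gain is nulled in the slot carrying the block's third symbol, leaving the middle slot generic. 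A short computation then shows that the three samples received by $d_1$ during the block form a $3\times3$ linear system in $(a_1,a_2,b_1)$ --- with $b_2$ already cancelled --- whose determinant is a nonzero polynomial in the channel gains, and symmetrically for $d_2$ in $(a_1,b_1,b_2)$; hence each destination extracts its two desired symbols by a fixed linear combination of its received samples, the accompanying noise having variance $O(1)$ since $\mathcal{U},\mathcal{V}$ are finite and the relay noise is amplified only boundedly. Each block therefore realizes, per user, two parallel interference-free Gaussian sub-channels of SNR $\Theta(P)$; an i.i.d.\ Gaussian codebook across many blocks together with the usual joint-typicality argument gives each user $\frac23\cdot\frac12\log_2 P-O(1)$, hence sum-rate $\frac43\cdot\frac12\log_2 P-O(1)$. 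All the genericity invoked (the various effective gains and determinants being nonzero) holds for almost all channel gains.

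\emph{Converse.} Since the relays perform bounded scalar AF, the map $(X_1^n,X_2^n)\mapsto(Y_1^n,Y_2^n)$ is linear, with diagonal scheme-chosen gain sequences and additive noise of bounded variance, so it suffices to upper-bound the DoF of this effective channel. I would classify each slot, according to $(\mu_k,\lambda_k)$, into one of five states reflecting the linear-algebraic structure the relays impose --- a generic state together with four low-dimensional special ones (interference to $d_1$ cancelled; interference to $d_2$ cancelled; a destination cut off from its own signal; the relay-to-destination gain matrix rank-deficient) --- which for almost all channel gains are mutually exclusive and exhaustive, and let $n_1,\dots,n_5$ be the corresponding slot counts ($\sum n_j=n+O(1)$). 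Running three genie-aided Fano arguments, in which $d_1$ (resp.\ $d_2$) is handed a suitably scaled noisy copy of the interfering input so that the interference contributions from the five slot types telescope, I would obtain three upper bounds on weighted sums of $R_1,R_2$ whose right-hand sides are $\frac12\log_2 P$ times explicit linear functions of the normalized counts $n_j/n$; maximizing the implied value of $R_1+R_2$ over the simplex of state fractions --- a small linear program with optimum $4/3$ --- yields $\mathcal{D}\le 4/3$, with all exceptional coincidences confined to a measure-zero set of channel gains.

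The hard part is the converse, and within it the design of the three genies: one must ensure that across a mixture of generic, cross-link-nulled, and direct-link-nulled slots the ``signal'' and ``interference'' parts separate sharply enough that the three bounds are tight against the $4/3$ scheme rather than merely loose --- this is precisely where the tension among the five groups has to be pinned down quantitatively. A lesser but pervasive technical nuisance, present in both halves, is the ``almost all channel gains'' bookkeeping (excluding simultaneous vanishing of gains, accidental rank drops in the effective systems, accidental alignment of the genie signals) and the asymptotically negligible treatment of the one-slot relay delay and boundary effects.
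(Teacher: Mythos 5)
Your achievability half is sound and is essentially the paper's own scheme up to a relabeling: three slots per block, one slot with the effective $s_2\!\to\! d_1$ gain nulled, one with $s_1\!\to\! d_2$ nulled, one generic, each source repeating one of its two symbols so that the undesired fresh symbol never reaches the unintended destination; the resulting $3\times 3$ systems are triangular after reordering and their determinants are products of effective gains that are nonzero under the paper's conditions (c-1)--(c-3). The converse, however, has a genuine gap. You correctly identify the architecture (classify slots by the zero pattern of the end-to-end matrix $\mathbf{G}_k$, prove three bounds, combine them combinatorially), but the entire technical content of the converse is the derivation of the three bounds, and you explicitly defer it ("the design of the three genies... is precisely where the tension... has to be pinned down"). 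In the paper each bound is on $R_1+R_2$ directly and has the form $\tfrac12(1+|L|/n)\log_2 P+O(1)$ for $L\in\{A,B,C\}$, so the concluding "linear program" is just $\min\{|A|,|B|,|C|\}\le n/3$; the hard step is showing, e.g., that the interference terms accumulated over the $A$ and $B$ slots contribute only $O(n)$ bits. That hinges on a specific lemma bounding $h(\mathbf{M}X+Y)-h(\mathbf{M'}X+Z)$ for diagonal $\mathbf{M},\mathbf{M'}$ by a quantity depending only on the noises, which lets entropy differences of the \emph{same} transmitted block seen through two different diagonal gain matrices cancel. Nothing in your sketch supplies this mechanism, and without it the claim that the interference contributions "telescope" is an assertion, not a proof.

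Two further concrete problems with your state classification. First, "the relay-to-destination gain matrix rank-deficient" is not a valid fifth state: $\mathbf{G}_k$ can be rank one with all four entries nonzero (indeed $\lambda_k=0$, as in the paper's own phase 3, produces exactly this), so a rank-based state overlaps with your generic state and destroys the claimed mutual exclusivity. The correct partition is by which single entry of $\mathbf{G}_k$ vanishes, and it rests on the structural fact --- needing conditions (c-1)--(c-3), i.e., full rank of $\mathbf{H_1}$, $\mathbf{H_2}$, $\mathbf{H}^1$, $\mathbf{H}^2$ --- that no two entries of $\mathbf{G}_k$ can vanish simultaneously unless $\mu_k=\lambda_k=0$; you assert exclusivity and exhaustiveness but never isolate this fact. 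Second, "a destination cut off from its own signal" is two states, not one, and they play asymmetric roles in the three bounds (the paper's Bound 2 treats $C_1\cup C_2$ and $C_3$ differently), so lumping them obscures exactly the case analysis your "genies" would have to get right.
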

\noindent In particular, the channel gain conditions needed for Theorem \ref{MainTh} to yield 4/3 sum-DoF are as follows:
\begin{align}
\nonumber & \text{(c-1)  All channel gains are non-zero.} \\
\nonumber &  \text{(c-2) rank}(\mathbf{H_i})=2, ~ i \in \{1,2\}.\\
\nonumber &  \text{(c-3) rank}\left(\mathbf{H}^i\stackrel{\Delta}{=}\begin{bmatrix} h_{u,d_1} h_{s_i,u} & h_{v,d_1} h_{s_i,v} \\ h_{u,d_2} h_{s_{\bar{i}},u} & h_{v,d_2} h_{s_{\bar{i}},v} \end{bmatrix}\right)=2, \\
\label{eq:conditions}& \qquad \ i \in \{1,2\}, \bar{i}=3-i.
\end{align}
It is easy to see that almost all values of channel gains satisfy the above conditions. In the rest of the paper, in which we prove Theorem~\ref{MainTh}, we assume that  conditions  (c-1)--(c-3) hold.


\section{Achieving $4/3$ Sum-DoF by Time-Varying AF} \label{achieve}

The achievability scheme consists of three phases, during which each source sends two distinct symbols, and at the end of the three phases each receiver is able to reconstruct an interference free, but noisy, version of its desired symbols.

First note that, for time-varying AF strategies, the received signals at the destinations at each time $k$ can be written as 
\begin{equation}
\begin{split}
\begin{bmatrix} Y_{1,k} \\ Y_{2,k} \end{bmatrix} &  = \mathbf{H_2} \begin{bmatrix} \mu_k & 0 \\ 0 & \lambda_k \end{bmatrix} \mathbf{H_1} \begin{bmatrix} X_{1,k-1} \\ X_{2,k-1} \end{bmatrix} + \begin{bmatrix} \tilde{Z}_{1,k} \\ \tilde{Z}_{2,k} \end{bmatrix} \\
& = \mathbf{G}_k \begin{bmatrix} X_{1,k-1} \\ X_{2,k-1} \end{bmatrix} + \begin{bmatrix} \tilde{Z}_{1,k} \\ \tilde{Z}_{2,k} \end{bmatrix},
\end{split}
\end{equation}
where $\mu_k$ and $\lambda_k$ are the AF coefficients at time $k$, $\tilde{Z}_{i,k} = h_{u,d_i}\mu_k Z_{u,k-1}+h_{v,d_i}\lambda_k Z_{v,k-1}+Z_{d_i,k}$ is the effective noise at destination $d_i$, $i \in \{1,2\}$, and $\mathbf{G}_k =  \mathbf{H_2} \begin{bmatrix} \mu_k & 0 \\ 0 & \lambda_k \end{bmatrix} \mathbf{H_1}$ is the equivalent end-to-end channel matrix given by 
\begin{align} \label{Gk}
&\mathbf{G}_k = \\
&\begin{bmatrix} \mu_k h_{u,d_1} h_{s_1,u} \hspace{-1mm}+\hspace{-1mm} \lambda_k h_{v,d_1} h_{s_1,v} & \mu_k h_{u,d_1} h_{s_2,u} \hspace{-1mm}+\hspace{-1mm} \lambda_k h_{v,d_1} h_{s_2,v} \\  \mu_k h_{u,d_2} h_{s_1,u} \hspace{-1mm}+\hspace{-1mm} \lambda_k h_{v,d_2} h_{s_1,v} & \mu_k h_{u,d_2} h_{s_2,u} \hspace{-1mm}+\hspace{-1mm} \lambda_k h_{v,d_2} h_{s_2,v} \end{bmatrix}. \notag
\end{align}

\noindent For notational convenience, let $\mathbf{G}_k = \begin{bmatrix} \alpha_{1,k} & \beta_{1,k} \\ \alpha_{2,k} & \beta_{2,k} \end{bmatrix}$. Also, we will only need $\tilde{Z}_{i,k}$  for our analysis; so we will drop the tilde and write $Z_{i,k}$. Then, the received signal at destination $d_i$, $i \in \{1,2\}$, at time $k$ is
\begin{equation} \label{RSignal}
Y_{i,k}=\alpha_{i,k}X_{1,k}+\beta_{i,k}X_{2,k}+Z_{i,k},  \text{\quad $k \in \{1,2,\dots,n\}$}.
\end{equation}
\noindent Note that the variance of $Z_{i,k}$ depends only on channel coefficients and amplifying factors (chosen from ($\mathcal{U,V}$)), therefore it does not scale with $P$.

We will now describe the three phases of our time-varying AF achievability scheme in detail. Set $\mathcal{U}=\{c\}$, and $\mathcal{V}=\{0,-c h_{u,d_1} h_{s_2,u}/h_{v,d_1} h_{s_2,v}, -c h_{u,d_2} h_{s_1,u}/h_{v,d_2} h_{s_1,v}\}$, where the constant $c \in \mathbb{R}$ is chosen to satisfy the power constraint $P$ at the relays. More specifically, \\
\noindent $c= \min\left\{\sqrt{1/(h_{s_1,u}^2 \hspace{-1mm} +h_{s_2,u}^2 \hspace{-1mm}+1)}, l\sqrt{1/(h_{s_1,v}^2+h_{s_2,v}^2+1)} \right\}$,
where \\$l  =  \min\{ {|h_{v,d_1} h_{s_2,v}/h_{u,d_1} h_{s_2,u}| , |h_{v,d_2} h_{s_1,v}/h_{u,d_2} h_{s_1,u}|} \}.$ Note that the denominators are non-zero by condition (c-1).\\

\noindent \textbf{Phase 1.} In this phase, $s_1$ and $s_2$ send two symbols $a_1$ and $b_1$ respectively $(a_1^2,b_1^2 \leq P)$. We choose the AF factors at the relays  such that the interference from $s_2$ is canceled at $d_1$. More specifically, we set $\mu_1=c$ and $\lambda_1=-c h_{u,d_1} h_{s_2,u}/h_{v,d_1} h_{s_2,v}$. By inserting this choice of $\lambda_1$ and $\mu_1$ in (\ref{RSignal}), $d_1$ and $d_2$ will respectively receive
\begin{equation} \label{timeslot1}
y_{1,1}=\alpha_{1,1}a_1+z_{1,1}, \text{ and } y_{2,1}= \underbrace{\alpha_{2,1}a_1+\beta_{2,1}b_1}_{L_1(a_1,b_1)}+z_{2,1},
\end{equation}
where $\alpha_{1,1}\ne 0$ and $\beta_{2,1}\ne 0$ (due to conditions (c-1), (c-2), and (c-3) in (\ref{eq:conditions})), and $L_1(a_1,b_1)$ indicates a linear equation in $a_1$ and $b_1$. Thus, as shown in Figure \ref{scheme}(a), $d_1$ and $d_2$ now respectively have noisy versions of $a_1$ and $L_1(a_1,b_1)$.\\
\begin{figure}[!hbt]
\centering 
\vspace{-3mm}
\subfigure[tight][Phase 1]{\includegraphics[scale=0.78]{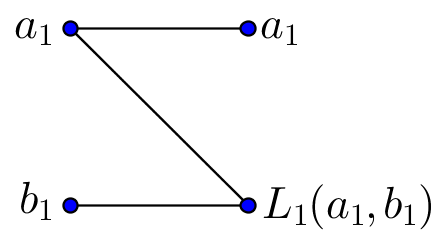}} 
\subfigure[tight][Phase 2]{\includegraphics[scale=0.78]{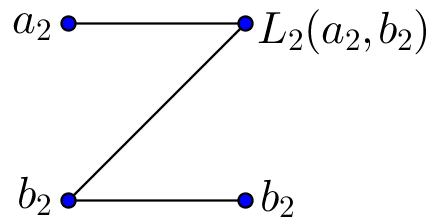}} 
\subfigure[tight][Phase 3]{\includegraphics[scale=0.78]{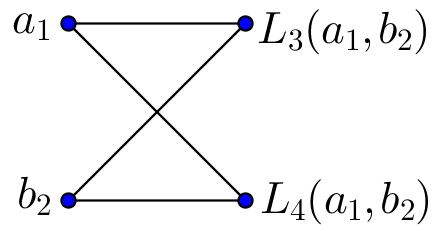}}
\caption{Illustration of achievability scheme. At each phase, the transmitted symbols are shown on the left. The  received signals at destinations are given on the right, where the noise is dropped and $L(x,y)$ denotes a linear combination of $x$ and $y$. } 
\label{scheme}
\end{figure}

\noindent \textbf{Phase 2.} In this phase, $s_1$ and $s_2$ send two new symbols $a_2$ and $b_2$ $(a_2^2,b_2^2 \leq P)$. However, this time, we cancel the effect of $s_1$ at $d_2$, by letting $\mu_2 = c $ and $\lambda_2 =-c h_{u,d_2} h_{s_1,u}/h_{v,d_2} h_{s_1,v}$. Then $d_1$ and $d_2$ will respectively receive
\begin{equation} \label{timeslot2}
y_{1,2}=\underbrace{\alpha_{1,2}a_2+\beta_{1,2}b_2}_{L_2(a_2,b_2)}+z_{1,2}, \text{ and } y_{2,1}=\beta_{2,2}b_2+z_{2,2},
\end{equation}
where $\alpha_{1,2} \ne 0$ and $\beta_{2,2} \ne 0$ (due to conditions (c-1), (c-2), and (c-3) in (\ref{eq:conditions})), and $L_2(a_2,b_2)$ indicates a linear equation in $a_2$ and $b_2$. Thus, as shown in Figure \ref{scheme}(b), $d_1$ and $d_2$ now respectively have noisy versions of $L_2(a_2,b_2)$ and $b_2$.

\vspace{2mm}
\noindent \textbf{Phase 3.} Now notice that, if, at phase 3, destination $d_1$ receives a linear combination of $a_1$ and $b_2$ ($L_3(a_1,b_2)$), then it can solve for (a noisy version of) $a_2$ given equations (\ref{timeslot1}) and (\ref{timeslot2}). Similarly, if $d_2$ receives $L_4(a_1,b_2)$ then it can also solve for (a noisy version of) $b_1$ given equations (\ref{timeslot1}) and (\ref{timeslot2}). Thus, as shown in Figure \ref{scheme}(c), in phase 3, $s_1$ sends $a_1$, $s_2$ sends $b_2$, and we choose $\mu_3=c$, and $\lambda_3=0$, so that $d_1$ and $d_2$ receive
{\small \begin{align} \label{timeslot3}
y_{1,3}=\underbrace{\alpha_{1,3}a_1+\beta_{1,3}b_2}_{L_3(a_1,b_2)}+z_{1,3}, ~
y_{2,3}=\underbrace{\alpha_{2,3}a_1+\beta_{2,3}b_2}_{L_4(a_1,b_2)}+z_{2,3},
\end{align}}where $\beta_{1,3} \ne 0$, and $\alpha_{2,3} \ne 0$ (due to condition (c-1) in (\ref{eq:conditions})). Therefore, after the three phases, $d_1$ can construct
  \begin{equation}
 y_1^{a_1} = a_1+z_{1,1}/\alpha_{1,1}, \text{\qquad\quad and }\label{firstDataStream}
   \end{equation}
   \begin{equation}
 y_1^{a_2} = a_2+\frac{1}{\alpha_{1,2}}z_{1,2}-\frac{\beta_{1,2}}{\alpha_{1,2}\beta_{1,3}}z_{1,3}+\frac{\alpha_{1,3}\beta_{1,2}}{\alpha_{1,1}\alpha_{1,2}\beta_{1,3}}z_{1,1}.
   \label{secondDataStream}
    \end{equation}
    from $(y_{1,1}, y_{1,2},y_{1,3})$. Let $\sigma_1^2$ and $\sigma_2^2$ be the variances of the noise terms in equations (\ref{firstDataStream}) and (\ref{secondDataStream}). Note that they depend only on channel coefficients and AF factors. Hence, they are constants that do not scale with $P$. Then, by using a proper outercode, we can achieve a rate of 
\begin{equation*}
\begin{split}
R_1  & = \frac{1}{6} \left( \log \left(1+\frac{P}{\sigma_1^2}\right)+\log \left(1+\frac{P}{\sigma_2^2} \right)  \right)   \geq \frac{1}{3} \log \frac{P}{\sigma_1 \sigma_2}. 
\end{split}
\end{equation*}
So $d_1$ can achieve $2/3$ DoF. Similarly, $d_2$ can also achieve $2/3$ DoF, hence achieving a total of $4/3$ sum-DoF. Note that a similar achievability scheme was used for binary fading interference channels in~\cite[Appendix A]{AlirezaDelayed}.

\section{Outer Bounds on DoF of Time-Varying AF} \label{converse}

Consider a time-varying AF $(n,R_1,R_2)$-scheme $\mathcal{C}$ with power constraint $P$, and error probability $P_e^n$ such that $P_e^n \rightarrow 0$ as $n \rightarrow \infty$. We will prove that $R_1+R_2 \leq (2/3) \log P + o (\log P)$.
Let $\mu_k$ and $\lambda_k$ denote the amplifying factors of $\mathcal{C}$ at time $k$ of relays $u$ and $v$, respectively.
Consider the end-to-end channel matrix $\mathbf{G}_k$ (defined in (\ref{Gk})) created by scheme $\mathcal{C}$ at time $k$. Note that the $i$-th column (row) of $\mathbf{G}_k$ ($i \in \{1,2\}$) corresponds to a linear combination of columns of $\mathbf{H_1}$ ($\mathbf{H_2}$) with coefficients $\mu_k h_{s_i,u}$ and $\lambda_k h_{s_i,v}$ ($\mu_k h_{u,d_i}$ and $\lambda_k h_{v,d_i}$). Also, the entries of the main diagonal are linear combinations of the columns of $\mathbf{H^1}$ (defined in (\ref{eq:conditions})) with coefficients $\mu_k$ and $\lambda_k$; similarly, the entries of the counterdiagonal are linear combinations of the columns of $\mathbf{H^2}$ (defined in (\ref{eq:conditions})) with coefficients $\mu_k$ and $\lambda_k$. Since by conditions (c-1), (c-2), and (c-3), specified in (\ref{eq:conditions}), all channel coefficients are non-zero, and $\mathbf{H_1}$, $\mathbf{H_2}$, $\mathbf{H^1}$, and $\mathbf{H^2}$ have full rank, it follows that no pair of entries in $\mathbf{G}_k$ can be zero unless $\lambda_k=\mu_k=0$. Therefore, at each time $k$ either $\mathbf{G}_k$ has at most one zero entry or $\mathbf{G}_k=\mathbf{0}$. As a result, if $\mathbf{G}_k$ is non-zero at any time $k$, then it belongs to one of the states shown in Figure \ref{states}. Asterisks denote non-zero entries. We denote the collective state $(C_1,C_2,C_3)$  by $C$. 

\begin{figure}[h]
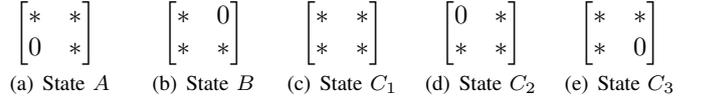

\centering
\hspace{-0.4in}
\subfigcapskip = 4pt
\subfigure[loose][State $A$]{$\begin{array}{lr}  &  \\ &   \end{array} \begin{bmatrix} *  & * \\ 0 & * \end{bmatrix} \begin{array}{lr} &  \\ &  \end{array}  $ } \hspace{-0.4in} 
\subfigure[loose][\vspace{15pt} \newline State $B$]{$\begin{array}{lr}  &  \\  &  \end{array} \begin{bmatrix} *  & 0 \\ * & * \end{bmatrix} \begin{array}{lr}  &  \\  &  \end{array}$} \hspace{-0.4in}
\subfigure[loose][\vspace{15pt} \newline State $C_1$]{$\begin{array}{lr}  &  \\  &  \end{array} \begin{bmatrix} *  & * \\ * & * \end{bmatrix} \begin{array}{lr}  &  \\  &  \end{array}$} \hspace{-0.4in}
\subfigure[loose][\vspace{15pt} \newline State $C_2$]{$\begin{array}{lr}  &  \\  &  \end{array} \begin{bmatrix} 0  & * \\ * & * \end{bmatrix} \begin{array}{lr}  &  \\  &  \end{array}$} \hspace{-0.4in}
\subfigure[loose][\vspace{15pt} \newline State $C_3$]{$\begin{array}{lr}  &  \\  &  \end{array} \begin{bmatrix} *  & * \\ * & 0 \end{bmatrix} \begin{array}{lr}  &  \\  &  \end{array}$} \hspace{-0.4in}
\caption{If $\max\{|\mu_k|, |\lambda_k| \} > 0$ at a time $k$,  then the end-to-end channel matrix $\mathbf{G}_k$ is in one of the above states. Asterisks denote non-zero entries.}\label{states}
\end{figure}

Similarly to equation (\ref{RSignal}), we will write the vector of $n$  received signals at destination $d_i$ (with an abuse of notation)
\begin{equation} \label{nRS}
Y_i^n=\alpha_i^n X_1^n + \beta_i^n X_2^n + Z_i^n, \text{\quad $i \in \{1,2\}$, }
\end{equation}
where $\alpha_i^n$ and $\beta_i^n$ are understood as $n\times n$ diagonal matrices, where the $j^{th}$ entries of the diagonals are respectively $\alpha_{i,j}$, and $\beta_{i,j}$ ($i\in\{1,2\}, j\in\{1,\dots,n\}$). Similarly, for any $L \subset \{1,2,\dots,n\}$, we write
\begin{equation}
Y_i^L=\alpha_i^L X_1^L+\beta_i^L X_2^L+Z_i^L,
\end{equation}
where $\alpha_i^L$ and $\beta_i^L$ are $|L| \times |L|$ diagonal matrices, whose diagonal entries are respectively $\{ {\alpha_{i,l}\}}_{l \in L}$ and $\{ {\beta_{i,l}\}}_{l \in L}$ ($i \in \{1,2\})$.

Now, for any code $\mathcal{C}$ (with AF coefficients $\lambda_k$ and $\mu_k$, $k=1,\dots,n$), we define the set $A_{\mathcal{C}}$  as 
\begin{equation*} \label{defA}
A_{\mathcal{C}}=\{k \in \{1,2,\dots,n\}: \mu_k h_{u,d_2} h_{s_1,u} + \lambda_k h_{v,d_2} h_{s_1,v}=0\}.
\end{equation*}
Similarly, let $B_{\mathcal{C}}$, $C_{\mathcal{C}}$, $C_{1,\mathcal{C}}$, $C_{2,\mathcal{C}}$, and $C_{3,\mathcal{C}}$ be the sets of time indices corresponding to states $B$, $C$, $C_1$, $C_2$, and $C_3$, respectively. Also, let $S_{\mathcal{C}}=A_{\mathcal{C}} \cup  B_{\mathcal{C}} \cup C_{\mathcal{C}}$. So that $S^c_{\mathcal{C}}=\{ k \in \{1,2,\dots,n\}: \mu_k=\lambda_k=0\}$. Note that the previously defined sets are deterministic and well-defined, since the channel gains are fixed  and we are considering a specific scheme $\mathcal{C}$ which fixes $\mu_k$ and $\lambda_k$ for all $k$. Also, for ease of notation, we will drop the subscript $\mathcal{C}$ in the rest of this section and refer to those sets as $A$, $B$, $C$, $C_1$, $C_2$, $C_3$, and $S$. We now state our main lemma which yields $\mathcal{D} \leq 4/3$.
\begin{Lemma} \label{bounds} For any  time-varying AF $(n,R_1,R_2)$-scheme, $\mathcal{C}$, with power constraint $P$ and associated sets $A$, $B$, $C$, as defined above, we have  
\begin{align*}
R_1+R_2 & \leq \frac{1}{2} \left(1+ \frac{|C|}{n} \right) \log_2 P+ \tau_1,  \tag{Bound 1} \\
R_1+R_2 & \leq \frac{1}{2} \left(1+ \frac{|B|}{n} \right) \log_2 P + \tau_2,  \tag{Bound 2} \\
R_1+R_2 & \leq \frac{1}{2} \left(1+ \frac{|A|}{n} \right) \log_2 P + \tau_3,  \tag{Bound 3}
\end{align*}
where $\tau_1$, $\tau_2$, and $\tau_3$ are constants that do not depend on $P$.
\end{Lemma}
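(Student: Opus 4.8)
The plan is to derive all three bounds from Fano's inequality, which gives $nR_i\le I(W_i;Y_i^n)+n\epsilon_n$ with $\epsilon_n\to0$, reducing everything to an upper bound on $I(W_1;Y_1^n)+I(W_2;Y_2^n)$. The whole argument is differential-entropy bookkeeping built on the state structure: in state $A$ exactly $\alpha_{2,k}=0$, in $B$ exactly $\beta_{1,k}=0$, in $C_2$ exactly $\alpha_{1,k}=0$, in $C_3$ exactly $\beta_{2,k}=0$, in $C_1$ none vanish, and in $S^c$ all four vanish; hence $\beta_{1,k}\ne0$ off $B\cup S^c$, $\alpha_{2,k}\ne0$ off $A\cup S^c$, $\beta_{2,k}\ne0$ off $C_3\cup S^c$, and $\alpha_{1,k}\ne0$ off $C_2\cup S^c$. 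I would lean on two facts. (a) A scalar coordinate $gX+Z$ with $\mathrm{Var}(Z)$ and $|g|^{\pm1}$ bounded has entropy $\tfrac12\log P+O(1)$ once summed and Jensen-averaged against the per-codeword power budget $\sum_kx_{i,k}^2\le nP$, and $O(1)$ when $g=0$; so over a block, signal-bearing coordinates each cost $\tfrac12\log P$ and the rest only $O(1)$, total additive slack $O(n)$. (b) Every effective noise $\tilde Z_{i,k}$ retains the unit-variance destination term $Z_{d_i,k}$, so all $\mathrm{Var}(\tilde Z_{i,k})$ are bounded below and above; this makes every "$h(A)-h(B)\le h(A\mid B)$" step legitimate (the needed $h(B\mid A)\ge0$ holds because $h(B\mid A)$ dominates a conditional noise entropy $>0$) and lets signal-free coordinates be peeled off at $O(1)$ each. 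Crucially, the AF coefficients lie in the finite sets $\mathcal U,\mathcal V$ and, by (c-1), no channel gain vanishes, so each nonzero $\alpha_{i,k},\beta_{i,k}$ and every ratio of two of them takes finitely many values and is bounded away from $0$ and $\infty$; all accumulated slack below is therefore $O(n)$, i.e.\ a $P$-free constant after dividing by $n$, which is what $\tau_j$ is.

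Bound 1 I would do with no genie. Write $I(W_i;Y_i^n)=h(Y_i^n)-h(Y_i^n\mid W_i)$; subtracting the known desired signal, $h(Y_1^n\mid W_1)=h(\beta_1^nX_2^n+Z_1^n)$ and $h(Y_2^n\mid W_2)=h(\alpha_2^nX_1^n+Z_2^n)$. Regroup the sum crosswise as $[\,h(Y_1^n)-h(\alpha_2^nX_1^n+Z_2^n)\,]+[\,h(Y_2^n)-h(\beta_1^nX_2^n+Z_1^n)\,]$ and bound the brackets by $h(Y_1^n\mid\alpha_2^nX_1^n+Z_2^n)$ and $h(Y_2^n\mid\beta_1^nX_2^n+Z_1^n)$ using fact (b). Expanding by the chain rule and dropping conditioning down to the $k$-th coordinate, $h(Y_1^n\mid\alpha_2^nX_1^n+Z_2^n)\le\sum_kh(Y_{1,k}\mid\alpha_{2,k}X_{1,k}+Z_{2,k})$: when $\alpha_{2,k}\ne0$ we cancel $\alpha_{1,k}X_{1,k}$ up to noise and are left with $\beta_{1,k}X_{2,k}+\text{noise}$, which scales with $P$ only if also $\beta_{1,k}\ne0$, i.e.\ on $C$; when $\alpha_{2,k}=0$ the conditioning is inert but $Y_{1,k}$ itself scales only on $A$. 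So this bracket is $\le\tfrac12(|A|+|C|)\log P+O(n)$, and symmetrically the other is $\le\tfrac12(|B|+|C|)\log P+O(n)$. Adding and using $|A|+|B|+|C|\le n$ gives $I(W_1;Y_1^n)+I(W_2;Y_2^n)\le\tfrac12(n+|C|)\log P+O(n)$, hence Bound 1.

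For Bounds 2 and 3 the crosswise estimate is too loose, so I would use a one-message genie. For Bound 2 give $d_2$ the genie $W_1$, so $I(W_2;Y_2^n)\le I(W_2;Y_2^n\mid W_1)=h(\beta_2^nX_2^n+Z_2^n)-h(Z_2^n)$, while $I(W_1;Y_1^n)=h(Y_1^n)-h(\beta_1^nX_2^n+Z_1^n)$; thus $I(W_1;Y_1^n)+I(W_2;Y_2^n)\le h(Y_1^n)+[\,h(\beta_2^nX_2^n+Z_2^n)-h(\beta_1^nX_2^n+Z_1^n)\,]-h(Z_2^n)$. Now $\beta_2^nX_2^n+Z_2^n$ and $\beta_1^nX_2^n+Z_1^n$ are two noisy images of $X_2^n$ whose coordinate supports differ only in $B$ (only $\beta_{2,k}\ne0$) and $C_3$ (only $\beta_{1,k}\ne0$). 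By sub-additivity the $\beta_2$-side sheds its $B$-coordinates at cost $\tfrac12|B|\log P+O(n)$; on the $\beta_1$-side the $C_3$-coordinates, and the signal-free $B\cup S^c$-coordinates on both sides, are peeled off at $O(n)$ by fact (b); and on the common support $A\cup C_1\cup C_2$, writing one image as a bounded diagonal rescaling of the other plus an independent bounded Gaussian perturbation (whose entropy is $O(n)$ by Hadamard's inequality) shows the two entropies differ by $O(n)$. Hence $h(\beta_2^nX_2^n+Z_2^n)-h(\beta_1^nX_2^n+Z_1^n)\le\tfrac12|B|\log P+O(n)$, and with $h(Y_1^n)\le\tfrac12n\log P+O(n)$ this yields Bound 2. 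Bound 3 is the mirror image: give $d_1$ the genie $W_2$, whereupon $\alpha_1^nX_1^n+Z_1^n$ and $\alpha_2^nX_1^n+Z_2^n$ differ in support exactly on $A$ and $C_2$, and $A$, $C_2$ play the roles $B$, $C_3$ had.

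The hard part is not any single inequality but keeping the cumulative additive error across all these steps at $O(n)$ and free of $P$; this is precisely where the hypotheses bite — finiteness of $\mathcal U,\mathcal V$ together with (c-1)--(c-3) keep every nonzero effective gain and every pairwise gain ratio bounded away from $0$ and $\infty$, and the retained unit-variance destination noise keeps every effective noise variance bounded on both sides. The correlation between $\tilde Z_{1,k}$ and $\tilde Z_{2,k}$ (they share the relay noises) is harmless: it only ever appears inside upper bounds and can be discarded by passing to marginals.
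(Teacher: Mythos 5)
Your overall architecture (Fano, genie for two of the bounds, state bookkeeping, Gaussian-maximizes-entropy plus Jensen against the power constraint) matches the paper, and your Bound 1 is actually a valid alternative route: the crosswise regrouping $[h(Y_1^n)-h(\alpha_2^nX_1^n+Z_2^n)]+[h(Y_2^n)-h(\beta_1^nX_2^n+Z_1^n)]$, followed by single-letter conditioning, avoids the paper's key technical lemma entirely for that bound (the paper instead splits $S$ into $(A\cup B)$ and $C$ and already needs its Lemma \ref{side} there). Your justification that the subtracted conditional entropies are nonnegative via the retained unit-variance destination noise is sound.

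The gap is in Bounds 2 and 3, at the step you describe as ``writing one image as a bounded diagonal rescaling of the other plus an independent bounded Gaussian perturbation.'' With $\mathbf{M}=\beta_2^V(\beta_1^V)^{-1}$ on the common support $V=A\cup C_1\cup C_2$, the residual $W=Z_2^V-\mathbf{M}Z_1^V$ is \emph{not} independent of $\beta_1^VX_2^V+Z_1^V$ (it is correlated with $Z_1^V$, and $\mathbf{M}$ is dictated by the signal gains, not by the noise covariance), so the claimed decomposition does not exist; and even granting an independent perturbation, $h(U+W)\leq h(U)+h(W)$ is not free for differential entropy --- it requires $h(W\mid U+W)\geq 0$, i.e. exactly the kind of conditional-noise-entropy control you would still have to supply. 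Passing to marginals does not rescue this, because the inequality you need is a \emph{difference} of entropies of two correlated noisy images of the same signal. This is precisely the content of the paper's Lemma \ref{side}: $h(\mathbf{M}X+Y)-h(\mathbf{M'}X+Z)\leq h(\mathbf{M'}\mathbf{M}^{-1}Y-Z)-h(Z\mid Y)-\log|\det(\mathbf{M'}\mathbf{M}^{-1})|$, proved by a mutual-information/worst-case-noise argument in which the conditioning on $\mathbf{M'}\mathbf{M}^{-1}Y-Z$ does the real work. Your target inequality $h(\beta_2^nX_2^n+Z_2^n)-h(\beta_1^nX_2^n+Z_1^n)\leq\tfrac{1}{2}|B|\log_2 P+O(n)$ is correct, but it needs this lemma (or an equivalent re-derivation of it) rather than the independent-perturbation shortcut; once that lemma is in hand, the rest of your Bounds 2 and 3 goes through and coincides with the paper's proof.
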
 
\noindent Before proving Lemma \ref{bounds}, we first demonstrate how it yields $\mathcal{D} \leq 4/3$. Suppose that the Lemma is true. Then, by taking the minimum of the three bounds, we get 
\begin{align} \label{conv}
R_1+R_2  & \leq \underset{L \in \{A,B,C\}}{\min} \frac{1}{2} \left( \! 1 \! + \! \frac{|L|}{n} \right)\log_2 \! P + \tau  \leq \frac{2}{3} \log_2 \! P  \!+\! \tau, \notag \\
\Rightarrow \mathcal{D} & \leq 4/3,
\end{align}
where $\tau=\max(\tau_1,\tau_2,\tau_3)$, and the second inequality follows from the fact that $\min \{|A|,|B|,|C|\} \leq n/3$ since $|A|+|B|+|C| \leq n$. We will now go back to proving the bounds in Lemma \ref{bounds}.\\

\noindent {\bf{Proof of Bound (1) in Lemma \ref{bounds}}}\\
Recall  that $S=A \cup B \cup C$. Then it is easy to show that $I(X_1^n;Y_1^n)=I(X_1^S;Y_1^S)$. 
 Similarly, $I(X_2^n; Y_2^n) = I(X_2^{S}; Y_2^{S})$.  Now, using Fano's inequality, we get
\begin{align} \label{Fano1}
& n\left(R_1+R_2\right) \leq I\left(X_1^n; Y_1^n\right)+ I\left(X_2^n; Y_2^n\right) + n{\epsilon}_n \notag \\
& ~~ = I\left(X_1^S; Y_1^S\right)+ I\left(X_2^S; Y_2^S\right)+ n{\epsilon}_n  \notag \\
& \hspace{-4.5mm} \stackrel{(S=A \cup B \cup C)}{=} I\left(X_1^S; Y_1^A, Y_1^B\right) + I\left(X_2^S; Y_2^A, Y_2^B\right) + n{\epsilon}_n \notag \\ 
& \quad ~~ +  I\left(X_1^S; Y_1^C | Y_1^A, Y_1^B\right) + I\left(X_2^S; Y_2^C | Y_2^A, Y_2^B\right),   
\end{align}
where $\epsilon_n \rightarrow 0$, as $P_e^n \rightarrow 0$. Now, we bound the last two terms:
\begin{align} \label{CC}
I(X_i^S; Y_i^C | Y_i^A, Y_i^B) & \leq h(Y_i^C) - h(Y_i^C | Y_i^A, Y_i^B, X_i^S, X_{\bar{i}}^C) \notag \\
& = h(Y_i^C) - h(Z_i^C),
\end{align}
where $i \in \{1,2\} $, $\bar{i}= 3-i$, and the equality follows from the fact that noise is independent of $\{W_1,W_2\}$ and of noise terms at other time steps. Now, to bound the first two terms in (\ref{Fano1}), consider the following chain of inequalities.
\begin{align} \label{AB1} 
&  I(X_1^S; Y_1^A, Y_1^B) + I(X_2^S; Y_2^A, Y_2^B) \notag \\
& \leq h( Y_1^A) + h(Y_1^B) - h( Y_1^A, Y_1^B | X_1^S)  \notag \\
& \quad +  h( Y_2^A) + h(Y_2^B) - h( Y_2^A, Y_2^B | X_2^S)  \notag \\
& = h( Y_1^A) + h(Y_1^B) - h(\beta_1^AX_2^A+Z_1^A, Z_1^B | X_1^S) \notag  \\
& \quad + h( Y_2^A) + h(Y_2^B) - h(\alpha_2^BX_1^B+Z_2^B, Z_2^A | X_2^S)  \notag \\
& \stackrel{\text{(a)}}= h(Y_1^A)+h(Y_2^B)-h(Z_1^B) -h(Z_2^A)  \notag \\ 
& \quad +\left[ h(\alpha_1^BX_1^B+Z_1^B)- h(\alpha_2^BX_1^B+Z_2^B) \right]  \notag \\
& \quad + \left[h(\beta_2^AX_2^A+Z_2^A) - h(\beta_1^AX_2^A+Z_1^A)\right], 
\end{align}
where (a) follows from the fact that $W_1$ and $W_2$ are independent, noise and $(W_1,W_2)$ are independent, and noise terms at different time steps are independent. Now, consider the following lemma.
\begin{Lemma} \label{side}
Let $X, Y, Z$ be two  random vectors of size $n$, such that $X \ci(Y,Z)$. Let $\mathbf{M}$ and $\mathbf{M'}$ be two $n \times n$ constant invertible matrices. Then
\begin{equation*}
\begin{split}
&  h(\mathbf{M} X + Y) -  h(\mathbf{M'} X + Z) \leq \\
 & h(\mathbf{M'}\mathbf{M}^{-1}Y  - Z) - h (Z |Y) - \log \left | \det\left(\mathbf{M'}\mathbf{M}^{-1} \right) \right| .
\end{split}
\end{equation*} 
\end{Lemma}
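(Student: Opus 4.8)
The plan is to reduce the left-hand difference to a single differential entropy of the form $h(\mathbf{N}X + Y')$ for an appropriate linear combination, and then bound that by a conditional entropy plus a Jacobian term. First I would apply the invertible linear map $\mathbf{M}^{-1}$ inside the first entropy: since differential entropy transforms as $h(\mathbf{A}V) = h(V) + \log|\det \mathbf{A}|$ for invertible $\mathbf{A}$, we have $h(\mathbf{M}X+Y) = h\bigl(X + \mathbf{M}^{-1}Y\bigr) + \log|\det \mathbf{M}|$ and similarly $h(\mathbf{M'}X+Z) = h\bigl(X + \mathbf{M'}^{-1}Z\bigr) + \log|\det \mathbf{M'}|$. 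Hence
\[
h(\mathbf{M}X+Y) - h(\mathbf{M'}X+Z) = h\bigl(X + \mathbf{M}^{-1}Y\bigr) - h\bigl(X + \mathbf{M'}^{-1}Z\bigr) + \log\Bigl|\det\bigl(\mathbf{M}\mathbf{M'}^{-1}\bigr)\Bigr|,
\]
so it suffices to handle the case of two "unit-gain" versions $X+Y_1$ and $X+Z_1$ with $Y_1 = \mathbf{M}^{-1}Y$, $Z_1 = \mathbf{M'}^{-1}Z$, both independent of $X$.

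Next I would bound $h(X+Y_1) - h(X+Z_1)$. The standard trick is to note $h(X+Z_1) \ge h(X+Z_1 \mid Y_1, X) = h(Z_1 \mid Y_1)$, using that $X \ci (Y_1,Z_1)$ so conditioning on $X$ removes it from the sum without changing the conditional law of $Z_1$ given $Y_1$ (this also needs $Y_1, Z_1$ to be jointly independent of $X$, which follows from the hypothesis $X \ci (Y,Z)$ together with $Y_1,Z_1$ being deterministic functions of $Y,Z$). For the positive term, I would observe that $(X+Y_1) - (X+Z_1) = Y_1 - Z_1$ is a deterministic function of the pair, so by the data-processing-type inequality $h(X+Y_1) = h\bigl(X+Y_1 \mid Y_1 - Z_1\bigr) + \ldots$ — more cleanly, use $h(X+Y_1) \le h\bigl(X+Y_1, \, Y_1-Z_1\bigr) = h(Y_1 - Z_1) + h\bigl(X+Y_1 \mid Y_1 - Z_1\bigr)$ and then bound $h(X+Y_1\mid Y_1-Z_1)\le h(X+Z_1\mid Y_1 - Z_1) + h(Y_1-Z_1\mid\ldots)$... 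Actually the slickest route: write $h(X+Y_1)-h(X+Z_1)\le h(X+Y_1)-h(X+Z_1\mid X+Y_1)=I(X+Z_1;X+Y_1)+h(X+Y_1)-h(X+Z_1)$ — no. Let me instead use the clean identity $h(X+Y_1) - h(X+Z_1 \mid Y_1,X) \le h\bigl((X+Y_1) - \text{something}\bigr)$; concretely, since $h(X+Z_1\mid Y_1, X) = h(Z_1\mid Y_1)$ as above, it remains to show $h(X+Y_1)\le h(Y_1-Z_1)$, which is false in general. So the correct pairing is: $h(X+Y_1)-h(X+Z_1)\le h(X+Y_1) - h(X+Z_1\mid X+Y_1) = I(X+Y_1; X+Z_1)$, and then bound $I(X+Y_1;X+Z_1)\le h(X+Y_1)-h((X+Y_1)-(X+Z_1))=h(X+Y_1)-h(Y_1-Z_1)$ — still not matching. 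The right move, matching the target $h(\mathbf{M'}\mathbf{M}^{-1}Y - Z)$: note $\mathbf{M'}\mathbf{M}^{-1}Y - Z = \mathbf{M'}(Y_1 - Z_1)$, so $h(Y_1-Z_1) = h(\mathbf{M'}\mathbf{M}^{-1}Y - Z) - \log|\det\mathbf{M'}|$. Then the claim becomes $h(X+Y_1) - h(X+Z_1) \le h(Y_1-Z_1) - h(Z_1\mid Y_1) + \log|\det \mathbf M'| - \log|\det(\mathbf M \mathbf M'^{-1})|$... I would reconcile the constants at the end.

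So the decomposition I will actually carry out is: (i) $h(X+Z_1) \ge h(Z_1\mid Y_1)$ via conditioning on $X$; (ii) $h(X+Y_1) \le h(X+Y_1, Y_1-Z_1) = h(Y_1-Z_1) + h(X+Y_1\mid Y_1-Z_1) \le h(Y_1-Z_1) + h(X+Z_1)$, where the last step uses $h(X+Y_1\mid Y_1-Z_1) = h(X+Z_1 + (Y_1-Z_1)\mid Y_1-Z_1) = h(X+Z_1\mid Y_1-Z_1) \le h(X+Z_1)$. Combining (i) and (ii) gives $h(X+Y_1) - h(X+Z_1) \le h(Y_1-Z_1)$, which is too weak by the $h(Z_1\mid Y_1)$ term; the sharpening comes from using (i) inside (ii): $h(X+Y_1) - h(X+Z_1) \le h(Y_1-Z_1) + h(X+Z_1) - 2h(X+Z_1)$... the main technical obstacle is precisely bookkeeping these conditional entropies so that exactly one copy of $h(Z_1\mid Y_1)$ is subtracted; the cleanest execution replaces step (ii) by $h(X+Y_1\mid Y_1-Z_1,\, Y_1) = h(X + Z_1 \mid Y_1-Z_1, Y_1) = h(X+Z_1\mid Y_1)$ and uses $h(X+Z_1\mid Y_1)\ge h(Z_1\mid Y_1)$ together with $h(X+Y_1) \le h(X+Y_1\mid Y_1-Z_1) + h(Y_1-Z_1) \le h(X+Y_1\mid Y_1-Z_1,Y_1) + h(Y_1-Z_1) + I(X+Y_1;Y_1\mid Y_1-Z_1)$ — I expect the $Y_1$-conditioning to be the fiddly part. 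Finally I would translate everything back through $Y_1 = \mathbf M^{-1}Y$, $Z_1 = \mathbf M'^{-1}Z$, collecting the Jacobian terms $\log|\det\mathbf M|$, $\log|\det\mathbf M'|$ into the single term $-\log|\det(\mathbf M'\mathbf M^{-1})|$ that appears in the statement, and rewrite $h(Y_1 - Z_1) = h(\mathbf M'^{-1}(\mathbf M'\mathbf M^{-1}Y - Z))$ to produce $h(\mathbf M'\mathbf M^{-1}Y - Z)$. The heart of the argument is elementary entropy manipulation; the only real obstacle is getting the constants and the single conditional-entropy penalty to land exactly as stated.
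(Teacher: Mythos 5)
Your reduction to ``normalized'' variables is sound: setting $Y_1=\mathbf{M}^{-1}Y$ and $Z_1=\mathbf{M'}^{-1}Z$ and tracking the Jacobians correctly reduces the lemma to showing $h(X+Y_1)-h(X+Z_1)\leq h(Y_1-Z_1)-h(Z_1\mid Y_1)$, and you correctly identify that the entire difficulty is recovering the $-h(Z_1\mid Y_1)$ term. But that step is never actually established. The decomposition (i)+(ii) you commit to yields only $h(X+Y_1)-h(X+Z_1)\leq h(Y_1-Z_1)$, which you yourself note is too weak, and the ``cleanest execution'' you substitute for (ii) does not go through: the inequality $h(X+Y_1)\leq h(X+Y_1\mid Y_1-Z_1)+h(Y_1-Z_1)$ is equivalent to $h(Y_1-Z_1\mid X+Y_1)\geq 0$, which is not guaranteed for differential entropy (the same issue already undermines $h(X+Y_1)\leq h(X+Y_1,Y_1-Z_1)$ in step (ii)); moreover $h(X+Y_1\mid Y_1-Z_1,Y_1)=h(X\mid Y_1,Z_1)=h(X)$, not $h(X+Z_1\mid Y_1)$ as you assert, and the residual term $I(X+Y_1;Y_1\mid Y_1-Z_1)$ is left uncontrolled. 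So the heart of the lemma is missing.

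The paper's proof uses exactly the ingredients you circle around, but assembles them in a different and valid order: first $h(X+Z_1)\geq h(X+Z_1\mid Y_1-Z_1)$ (conditioning reduces differential entropy, which is always legitimate), then $h(X+Z_1\mid Y_1-Z_1)=h(X+Y_1\mid Y_1-Z_1)$ since the two arguments differ by the conditioning variable; hence the difference is at most $I(X+Y_1;\,Y_1-Z_1)=h(Y_1-Z_1)-h(Y_1-Z_1\mid X+Y_1)$, and finally $h(Y_1-Z_1\mid X+Y_1)\geq h(Y_1-Z_1\mid X,Y_1)=h(Z_1\mid X,Y_1)=h(Z_1\mid Y_1)$ using $X\ci(Y,Z)$. (The paper carries this out without normalizing, working directly with $\mathbf{M'}X+\mathbf{M'}\mathbf{M}^{-1}Y$ and $\mathbf{M'}X+Z$, whose difference is $\mathbf{M'}\mathbf{M}^{-1}Y-Z$.) Replacing your step (ii) with this chain turns your outline into a correct proof.
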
 
\begin{proof}
\begin{equation*}
\begin{split}
&  h(\mathbf{M} X + Y) -  h(\mathbf{M'} X + Z) \\
& = h(\mathbf{M'}X + \mathbf{M'}\mathbf{M}^{-1} Y) - h(\mathbf{M'} X + Z)  \\ 
& \quad - \log \left | \det\left(\mathbf{M'}\mathbf{M}^{-1} \right) \right| \\
& \leq h(\mathbf{M'}X + \mathbf{M'}\mathbf{M}^{-1} Y) - h(\mathbf{M'} X + Z | \mathbf{M'}\mathbf{M}^{-1} Y - Z) \\
& \quad  - \log \left | \det\left(\mathbf{M'}\mathbf{M}^{-1} \right) \right| \\
& = - h(\mathbf{M'}X + \mathbf{M'}\mathbf{M}^{-1} Y | \mathbf{M'}\mathbf{M}^{-1} Y - Z) \\ 
& \quad + h(\mathbf{M'}X + \mathbf{M'}\mathbf{M}^{-1} Y) - \log \left | \det\left(\mathbf{M'}\mathbf{M}^{-1} \right) \right| \\
& = I(\mathbf{M'}X + \mathbf{M'}\mathbf{M}^{-1} Y; \mathbf{M'}\mathbf{M}^{-1} Y - Z) \\
& \quad  - \log \left | \det\left(\mathbf{M'}\mathbf{M}^{-1} \right) \right| \\
& = h(\mathbf{M'}\mathbf{M}^{-1} Y - Z) - h(\mathbf{M'}\mathbf{M}^{-1} Y - Z | \mathbf{M'}X + \mathbf{M'}\mathbf{M}^{-1} Y) \\
& \quad - \log \left | \det\left(\mathbf{M'}\mathbf{M}^{-1} \right) \right| \\
& \leq h(\mathbf{M'}\mathbf{M}^{-1} Y - Z) - h(\mathbf{M'}\mathbf{M}^{-1} Y - Z | X,Y) \\
& \quad  - \log \left | \det\left(\mathbf{M'}\mathbf{M}^{-1} \right) \right| \\
 & \leq h(\mathbf{M'}\mathbf{M}^{-1}Y  - Z) - h (Z |Y) - \log \left | \det\left(\mathbf{M'}\mathbf{M}^{-1} \right) \right| .
\end{split} 
\end{equation*} 
\end{proof}
\noindent Then we can apply Lemma \ref{side} on the bracketed terms in equation (\ref{AB1}), where for the first term $\{X=X_1^B$, $Y=Z_1^B$, $Z=Z_2^B$, $\mathbf{M}=\alpha_1^B$, and $\mathbf{M'}=\alpha_2^B\}$, and for the second term $\{X=X_2^A$, $Y=Z_2^A$, $Z=Z_1^A$, $\mathbf{M}=\beta_2^A,$ and $\mathbf{M'}=\beta_1^A\}$.  So by setting $\mathbf{M_1}=(\alpha_2^B)(\alpha_1^B)^{-1}$, $\mathbf{M_2}=(\beta_1^A)(\beta_2^A)^{-1}$, we get
\begin{align} \label{AB2}
&  I(X_1^S; Y_1^A, Y_1^B) + I(X_2^S; Y_2^A, Y_2^B) \notag \\
&  \leq h(Y_1^A)+ h(Y_2^B)-h(Z_1^B) -h(Z_2^A)  \notag \\
& \quad + h\left(\mathbf{M_1}Z_1^B-Z_2^B\right) - h\left(Z_2^B |Z_1^B\right)-\log \left|\det\left(\mathbf{M_1}\right)\right|   \notag \\ 
&  \quad + h\left(\mathbf{M_2}Z_2^A-Z_1^A\right) - h\left(Z_1^A | Z_2^A\right)  - \log \left|\det\left(\mathbf{M_2}\right) \right|   \notag \\
&  \leq h(Y_1^A)+ h(Y_2^B)  +\gamma_1 n,
\end{align}
\noindent where $\gamma_1$ is a constant that does not depend on $P$. Now, by equations (\ref{Fano1}), (\ref{CC}), and (\ref{AB2}), we get
\begin{equation} \label{bABCC}
n(R_1+R_2) \leq h(Y_1^A)+h(Y_2^B)+h(Y_1^C)+h(Y_2^C) + \gamma_2 n,
\end{equation}
where $\gamma_2$ is a constant that does not depend on $P$. Now, we bound $h(Y_1^A)$ by
\begin{equation*}
\begin{split}
& h(Y_1^A) - |A|\log_2 (2 \pi e)/2 \leq \sum_{k \in A} h(Y_{1,k}) -|A|\log_2 (2 \pi e)/2 \\
& \stackrel{\text{(a)}}\leq \sum_{k \in A}  \frac{1}{2} \log_2 ( \alpha_{1,k}^2 \E[X_{1,k-1}^2]+\beta_{1,k}^2 \E[X_{2,k-1}^2]+ \E[Z_{1,k}^2]),
\end{split}
\end{equation*} 
where (a) is true because Gaussian distribution maximizes differential entropy. Define $M_{i,j}$ ($i,j \in \{1,2\}$) and $M$ as
\begin{equation} \label{var1}
\begin{split} 
& M_{i,j} = \underset{\mu \in \mathcal{U}, \lambda \in \mathcal{V}}{\max} (\mu h_{u,d_i}h_{s_j,v}+\lambda h_{v,d_i}h_{s_j,v})^2, \\
& M = \underset{j \in \{1,2\}}{\max} ~ \underset{i \in \{1,2\}}{\max } (M_{i,j}) .
\end{split}
\end{equation}
Recall $\alpha_{1,k} = \mu_k h_{u,d_1}h_{s_1,u}+\lambda_k h_{v,d_1}h_{s_1,v}$. Then $\alpha_{1,k}^2 \leq M$, $\forall k$ . Similarly, $\beta_{1,k}^2 \leq M$, $\forall k$. Also, define $N$
\begin{equation} \label{var2}
N = \underset{i \in \{1,2\}}{\max } \left( \underset{\mu \in \mathcal{U}, \lambda \in \mathcal{V}}{\max} (h_{u,d_i}^2\mu^2+h_{v,d_i}^2\lambda^2) \right) + 1.
\end{equation}
\noindent Then $\E[Z_{1,k}^2]=((h_{u,d_1}\mu_k)^2+(h_{v,d_1}\lambda_k)^2+1) \leq N$, $\forall k$.
Thus
\begin{align} \label{bA1}
& h(Y_1^A) -|A|\log_2 (2 \pi e)/2 \notag  \\
& \leq  \sum_{k \in A} \frac{1}{2} \log_2 \left(M \E[X_{1,k-1}^2]+M \E[X_{2,k-1}^2]+ N\right) \notag \\
& \stackrel{\text{(a)}} \leq \frac{|A|}{2} \log_2 \left(N + M \frac{\sum_{k \in A} (\E[X_{1,k-1}^2]+\E[X_{2,k-1}^2])}{|A|} \right) \notag \\
& \stackrel{\text{(b)}} \leq \frac{|A|}{2} \log_2 \left(N+ 2MnP/|A| \right) \notag \\
& \leq \frac{|A|}{2} \log_2 P + \frac{|A|}{2} \log_2 (N+2Mn/|A|) \notag \\
& \leq \frac{|A|}{2} \log_2 P + \frac{|A|}{2} \log_2 N + \frac{1}{2} \log_2 \left(1+\frac{2Mn/N}{|A|} \right)^{|A|} \notag \\
& \stackrel{\text{(c)}} \leq \frac{A}{2} \log_2 P + \frac{n}{2} \log_2 (1+2M/N) + \frac{|A|}{2} \log_2 N,
\end{align}
where (a) follows from Jensen's inequality, (b) follows from the power constraint $P$, and (c) follows from the fact that the sequence $(1+x/m)^m$ is monotonically increasing in $m$ when $x>0$. 
Therefore, we can rewrite equation (\ref{bA1}) as
\begin{equation} \label{bA}
h(Y_1^A) \leq \frac{|A|}{2} \log_2 P + \gamma_3 n,
\end{equation}
where $\gamma_3$ is a constant that does not depend on $P$.
Similarly
\begin{align}
& h(Y_2^B) \leq  \frac{|B|}{2} \log_2 P + \gamma_4 n, \label{bB}\\
& h(Y_i^C) \leq  \frac{|C|}{2} \log_2 P + \gamma_{5,i} n, \text{\qquad $i \in \{1,2\}$}. \label{bC} 
\end{align} 
where $\gamma_4, \gamma_{5,1}$, and $\gamma_{5,2}$ are constants that do not depend on $P$. 
So, from equations (\ref{bABCC}), (\ref{bA}), (\ref{bB}), and (\ref{bC}) we get
\begin{equation*}
\begin{split}
n(R_1+R_2) & \leq \frac{1}{2}(|S|+|C|)\log_2 P + \tau_1 n \\
& \leq \frac{n}{2}(1+\frac{|C|}{n}) \log_2 P + \tau_1 n,
\end{split}
\end{equation*}
where $\tau_1$ is a constant that does not depend on $P$. \qquad \quad $\blacksquare$\\\\
\noindent {\bf{Proof of Bound (2) in Lemma \ref{bounds}}}\\
Define the set $E = C_1 \cup C_2$, and consider the following.
\begin{align}  \label{prebound2}
& n(R_1 + R_2) - n\epsilon_n \notag  \stackrel{\text{(a)}} \leq I(X_1^n; Y_1^n) + I (X_2^n; Y_2^n) \notag \\
&  = I(X_1^S; Y_1^S) + I (X_2^S; Y_2^S) \notag \stackrel{\text{(b)}}\leq I(X_1^S; Y_1^S) + I (X_2^S; Y_2^S| X_1^S) \notag  \\
& \leq h(Y_1^S) - h(\beta_1^AX_2^A+Z_1^A,Z_1^B,\beta_1^C X_2^C + Z_1^C) - h(Z_2^S) \notag  \\
& \quad + h(Y_2^B)+h(\beta_2^AX_2^A+Z_2^A,\beta_2^{E} X_2^{E}+Z_2^{E},Z_2^{C_3} ) \notag    \\
& \begin{array}{c@{\!\!\!}l}
&  \leq  [h(\beta_2^AX_2^A+Z_2^A,\beta_2^{E} X_2^{E}+Z_2^{E})  \notag \vspace{1mm} \\ 
& \:\: -h(\beta_1^AX_2^A+Z_1^A,\beta_1^{E} X_2^{E}+Z_1^{E})]  \notag \\
\end{array}
\begin{array}[c]{@{}l@{\,}l}
   \left. \begin{array}{c} \vphantom{0} 
   \\ \vphantom{0} \end{array} \right\} & (T1) \\
\end{array}\\
 & \: - h(Z_1^B, \beta_1^{C_3} X_2^{C_3} + Z_1^{C_3} | \beta_1^AX_2^A+Z_1^A,\beta_1^{E} X_2^{E} +Z_1^{E})  ~ (T2) \notag \\ 
& \: - h(Z_2^S) + h(Y_1^S) + h(Y_2^B)+ h(Z_2^{C_3}),  
\end{align}
where (a) follows from Fano's inequality, and (b) follows from the independence of $W_1$ and $W_2$. Now, we will bound the term $(T1)$. First, set $\mathbf{M_2} = (\beta_1^A) (\beta_2^A)^{-1}$, and $\mathbf{M_3} = (\beta_1^E)(\beta_2^E)^{-1}$. Then note
\begin{align} \label{temp}
& h(\beta_2^AX_2^A+Z_2^A,\beta_2^E X_2^E+Z_2^E) \notag \\
& - h(\beta_1^AX_2^A+Z_1^A,\beta_1^E X_2^E+Z_1^E) \leq \notag \\
& h(\mathbf{M_2} Z_2^A - Z_1^A, \mathbf{M_3} Z_2^E - Z_1^E) - h(Z_1^A, Z_1^E | Z_2^A, Z_2^E) \notag \\
&- \log \left| \det(\mathbf{M_2})\det(\mathbf{M_3})\right|,
\end{align}
where the inequality follows from Lemma \ref{side} and the fact that \\ $h(\mathbf{M} X, Y) = h(X,Y) + \log |\det\mathbf{M} |$. Now, we bound $(T2)$:
\begin{align} \label{temp2}
& h(Z_1^B, \beta_1^{C_3} X_2^{C_3} + Z_1^{C_3} | \beta_1^A X_2^A + Z_1^A, \beta_1^E X_2^E + Z_1^E) \geq \notag \\
& h(Z_1^B,Z_1^{C_3}|X_2^{C_3},\beta_1^A X_2^A + Z_1^A, \beta_1^E X_2^E + Z_1^E)  \geq \notag \\
& h(Z_1^B) + h(Z_1^{C_3}).
\end{align} 
Then, by equations (\ref{prebound2}), (\ref{temp}), and (\ref{temp2}), we get
\begin{align} \label{bound2}
& n(R_1 + R_2)   \leq h(Y_1^S) + h(Y_2^B) -h(Z_1^B) - h(Z_1^{C_3}) \notag \\
&  \quad + h(\mathbf{M_2} Z_2^A - Z_1^A, \mathbf{M_3} Z_2^E - Z_1^E) -h(Z_2^S) -h(Z_2^{C_3})   \notag \\
& \quad - h(Z_1^A, Z_1^E | Z_2^A, Z_2^E) - \log \left| \det(\mathbf{M_2})\det(\mathbf{M_3})\right|  \notag \\
& \leq h(Y_1^S) + h(Y_2^B) + \gamma_6 n,
\end{align}
where $\gamma_6$ is a constant that does not depend on $P$. Now,
similarly to (\ref{bA}), we bound $h(Y_1^S)$ as 
\begin{equation} \label{b2n}
h(Y_1^S) \leq \frac{|S|}{2}\log_2 P + \gamma_7 n, 
\end{equation}
where $\gamma_7$ is a constant that does not depend on $P$.
Then, from equations (\ref{bound2}), (\ref{b2n}), and (\ref{bB}), we get
\begin{equation*}
\begin{split}
n(R_1+R_2) & \leq \frac{1}{2}(|S|+|B|)\log_2 P + \tau_2 n, \\
& \leq \frac{n}{2}(1+\frac{|B|}{n}) \log_2 P + \tau_2 n,
\end{split}
\end{equation*}
where $\tau_2$ is a constant that does not depend on $P$. \qquad $\blacksquare$ \\
The proof of the third bound is similar, and thus omitted.

\section{Concluding Remarks} \label{conclusion}
In this paper, we analyzed the sum-DoF of the two-hop IC with real constant coefficients when relays are restricted to perform time-varying AF schemes. We showed that 4/3 sum-DoF is achievable using such schemes. In~\cite{JournalPrep}, we show that 4/3 is an upper bound for \emph{vector} linear schemes as well.  Although we considered \emph{real} channel gains, the ideas of this paper can be extended to complex channels, for which it was previously shown in~\cite{Jafar} that 3/2 sum-DoF can be achieved using linear schemes. We show in~\cite{JournalPrep} that by utilizing time-varying AF schemes, a sum-DoF of 5/3 can be achieved. We also extend the scheme for MIMO channels with real channel gains to achieve a sum-DoF of $2M-2/3$, where $M$ is the number of antennas at each node.
Future research may consider the impact of time-varying AF strategies in more general two-unicast networks, such as the layered networks studied in~\cite{Ilan2Unicast}.

\section*{Acknowledgement}
The research of I. Issa, S. L. Fong, and A. S. Avestimehr is supported in part by NSF Grants CAREER 0953117, CCF-1161720, Samsung Advanced Institute of Technology (SAIT), and AFOSR YIP award.

\bibliographystyle{IEEEtran}
\bibliography{IEEEabrv,database}

\end{document}